
\documentclass[letterpaper, 10 pt, conference]{ieeeconf}  

\IEEEoverridecommandlockouts                              

\overrideIEEEmargins                                      



\usepackage{amsmath} 
\usepackage{amssymb}  
\usepackage{graphicx}
\usepackage{subcaption}
\usepackage{color}
\usepackage{arydshln}

\newtheorem{prop}{Proposition}
\newtheorem{lem}{Lemma}
\newtheorem{remark}{Remark}
\DeclareMathOperator{\tr}{tr}
\DeclareMathOperator{\e}{\mathrm{e}}
\title{\LARGE \bf
Design of the Impulsive Goodwin's Oscillator: A Case Study$^\dagger$
}

\author{Alexander Medvedev$^{1}$, Anton V. Proskurnikov$^{2}$, and Zhanybai T. Zhusubaliyev$^{3,4}$
\thanks{$\dagger$ This manuscript extends our conference paper~\cite{MPZ22} and includes the proofs that have been omitted there.}
\thanks{* AM was partially supported the Swedish Research Council under grant 2019-04451. ZhZh was
partially supported  by the grant 14-22 of the Osh State University}
\thanks{$^{1}$Department of Information Technology, 
        Uppsala University, SE-752 37 Uppsala, Sweden
        [{\tt\small alexander.medvedev@it.uu.se}]}%
\thanks{$^{2}$Department of Electronics and Telecommunications, Politecnico di Torino, Turin, Italy, 10129 [{\tt\small anton.p.1982@ieee.org}]}%
\thanks{$^{3}$Department of Computer Science, International Scientific Laboratory for
Dynamics of Non-Smooth Systems, Southwest State University, Kursk, Russia
        [{\tt\small zhanybai@hotmail.com}]}%
\thanks{$^{4}$Faculty of Mathematics and Information Technology, Osh State University, Lenin st. 331, 723500, Osh, Kyrgyzstan        }%
}

\begin{document}

\maketitle
\thispagestyle{empty}
\pagestyle{empty}

\begin{abstract}
The impulsive Goodwin's oscillator (IGO) is a hybrid model
composed of a third-order continuous linear part and a pulse-modulated feedback. 
This paper introduces a design problem of the IGO to admit a desired periodic solution.
The dynamics of the continuous states  represent the  plant to be controlled, whereas  the parameters of the  impulsive feedback constitute design degrees of freedom.  The design objective is to select the free  parameters so that the  IGO exhibits a stable 1-cycle with desired characteristics. The impulse-to-impulse map of the oscillator is demonstrated to always possess a positive fixed point that corresponds to the desired periodic solution; the closed-form expressions to evaluate this fixed point are provided. Necessary and sufficient conditions for orbital stability of the 1-cycle are presented in terms of the oscillator parameters and exhibit similarity to the problem of static output control. An IGO design procedure is proposed and validated by simulation. The nonlinear dynamics of the designed IGO are reviewed by means of bifurcation analysis. Applications of the design procedure to dosing problems in chemical industry and biomedicine are envisioned. 
\end{abstract}

\section{INTRODUCTION}

In  control of engineered systems,  the  objective is normally to keep the controlled variable in a vicinity of a predefined setpoint or to make it follow a certain trajectory. In contast,  the purpose of physiological control is, arguably, to maintain the involved biological quantities within a certain domain, and to achieve this with minimal energy. 
Impulsive feedback control is one of the most widespread strategies applied by nature in physiological, especially in neuroendocrine, systems. In particular, the hypothalamic-pituitary adrenal and gonadal axes employ pulse-modulated control and encode information to target cells by manipulating both the amplitude and frequency of the hormone concentration pulses \cite{WTT10}.

The problem of exerting  a periodic  control action that maintains
a certain predefined level of effect in a dynamical plant often arises in process control and medicine. For instance, adding doses of chemicals to a reactor is typically done by means of logical (discrete) open-loop control~\cite{AH22}. Similarly, pharmaceuticals, in a tablet or an injection form, are predominantly administered according to a regimen that is prescribed by a physician. When the plant is dissipative and no feedback is involved, the resulting control system is simple and safe. However, the open-loop control cannot attenuate disturbances and handle plant uncertainty. 

Provided the actuators can be continuously manipulated and real-time measurements of the controlled variable 
are available, feedback control is routinely employed to achieve robust closed-loop stability or performance.
When the control signal is however restricted to \emph{impulsive} action, the only currently available feedback strategy is Model Predictive Control (MPC)~\cite{SPS15}. 
The utility and physiological coherence of impulsive MPC in drug delivery applications is readily recognized. A promising application of this control approach to insulin dosing in simulated diabetes patients is reported in e.g. \cite{RGS20}. In fact, impulsive insulin delivery mimics the physiological profile of secreting around ten major hormone pulses over 24 hours \cite{PGV88} with their temporal distribution  related to meals. Impulsive feedback control is inherently nonlinear and adding an advanced control law to the closed-loop dynamics further complicates stability and performance analysis. Yet, simple pulse-modulated feedback solutions manipulating the amplitude and frequency of the control impulses are lacking at present. 

The Impulsive Goodwin's Oscillator (IGO)  was proposed~\cite{MCS06,Aut09} as a hybrid (continuous-discrete) model of testosterone regulation in males, generalizing the concept of the original (continuous) Goodwin's oscillator~\cite{Good65} to the case of pulsatile (non-basal) secretion. 
The IGO possesses a number of properties that are typically sought for in biomedical applications, e.g. positivity and boundedness of the solutions. By design, the IGO has no equilibria and can only exhibit periodic or non-periodic (chaotic and quasiperiodic) oscillations \cite{ZCM12b}. It is proven that the IGO always possesses a unique (stable or unstable) $1$-cycle, i.e. a periodic solution with only one firing of the pulse-modulated feedback on the least period \cite{Aut09}.  Extensive bifurcation analysis of the IGO \cite{ZCM12b} suggests that the model, being equipped with the modulation functions of Hill's type, is monostable, even under a small delay present in the closed loop \cite{CMZh16}. Thus, when in a stable periodic solution, the IGO is not likely to change to another type of solution due to a temporary exogenous disturbance.

This paper addresses a novel problem of designing an IGO that exhibits a stable 1-cycle with desired characteristics. The main contributions of the paper are threefold:
\begin{itemize}
    \item the IGO design problem is formulated with respect to a desired solution, i.e. a 1-cycle;
    \item necessary and sufficient orbital stability conditions of the 1-cycle in the IGO are provided;
    \item bifurcation analysis of the nonlinear IGO  dynamics in vicinity of the designed 1-cycle is performed.
\end{itemize}

The paper is organized as follows. In Section~\ref{sec.known}, known facts about the dynamics of the IGO are summarized 
to facilitate further reading. In Section~\ref{sec.design},  the problem of designing an IGO that exhibits a stable predefined 1-cycle is formulated and solved. 
A numerical example is considered in Section~\ref{sec.numerical} to illustrate the proposed design concept.
Section~\ref{sec.bifur} provides bifurcation analysis of the designed IGO to discern nonlinear dynamics phenomena arising under deviations of the nominal parameter values. Finally,  conclusions are drawn.

\section{BACKGROUND}\label{sec.known}

This section summarizes the facts pertaining to  the IGO model and its behaviors that are used in the rest of the paper.

\subsection{The Impulsive Goodwin's Oscillator}
The IGO is given by  the following equations \cite{MCS06,Aut09}
\begin{equation}                            \label{eq:1}
\dot{x}(t) =Ax(t), \quad z(t)=Cx(t),
\end{equation}
\begin{equation}                             \label{eq:2}
\begin{aligned}
x(t_n^+) &= x(t_n^-) +\lambda_n B, \quad                                          
t_{n+1} =t_n+T_n,
\\ 
T_n &=\Phi(z(t_n)), \quad \lambda_n=F(z(t_n)),
\end{aligned}
\end{equation}
where $A,B,C$ are constant matrices, $n=0,1,\ldots$,
$$A=\left[\begin{smallmatrix} -a_1 &0 &0 \\ g_1 & -a_2 &0 \\ 0 &g_2 &-a_3 \end{smallmatrix}\right], B=\left[\begin{smallmatrix} 1 \\ 0 \\ 0\end{smallmatrix}\right], C =[0,0,1],
$$
$z$ is the controlled output, and the state
$x= [x_1,x_2,x_3]^\top$
describes concentrations  of some chemical substances.
In continuous model part~\eqref{eq:1}, $a_1,a_2,a_3>0$ are distinct constants and $g_1,g_2>0$ are positive gains.
It is readily observed that the matrix $A$ is Hurwitz stable, also, 
 \begin{equation}\label{CBLB}
 CB=0,\,CAB=0,\,CA^2B\ne 0.
 \end{equation}
The latter property implies, in particular, that $z(t)$ is a smooth function despite the jumps in \eqref{eq:2}.

The minus and plus in a superscript in~\eqref{eq:2} denote the left-sided and
a right-sided limit, respectively.
The amplitude modulation function $F(\cdot)$ and frequency modulation
function $\Phi(\cdot)$ are continuous and monotonic for positive arguments; $F(\cdot)$ is non-increasing
and $\Phi(\cdot)$ is non-decreasing, also,
\begin{equation}                             \label{eq:2a}
\Phi_1\le \Phi(\cdot)\le\Phi_2, \quad 0<F_1\le F(\cdot)\le F_2,
\end{equation}
where $\Phi_1$, $\Phi_2$, $F_1$, $F_2$ are positive constants\footnote{Notably, with respect to dosing applications, the bounds $F_1$ and $F_2$  specify the least and largest dose that can be delivered by the control law, while $\Phi_1$ and $\Phi_2$ prescribe the shortest and longest interval between the administered doses. The explicit way of enforcing these safety limits is favorable in, e.g., healthcare applications.}. Then control law~\eqref{eq:2} constitutes a frequency and amplitude
pulse modulation operator~\cite{GC98} implementing an output feedback over \eqref{eq:1}. The time instants $t_n$ are called (impulse) firing times
and $\lambda_n$ represent the corresponding impulse weights.

\subsection{Solution Properties}
The dynamics of the IGO are defined by differential equation \eqref{eq:1} in between the feedback firing times and undergo jumps of the magnitude $\lambda_nB$ at the times $t_n$ in accordance with~\eqref{eq:2}. Due to the positivity of $F_1$, the IGO lacks equilibria  and exhibits only  oscillatory periodic or non-periodic (e.g. chaotic or quasiperiodic) solutions.
The solutions of the IGO are positive under a positive initial condition $x(t_0^-)$, because $A$ is Metzler\footnote{A square matrix whose off-diagonal entries are all nonnegative is said to be Metzler.  The exponential $\e^{tA},t\geq 0$ is nonnegative for a Metzler $A$.} and $F(\cdot)$ is uniformly positive due to~\eqref{eq:2a}. It is proved in~\cite{Aut09} that the solutions are  bounded, because $A$ is Hurwitz and the nonlinear characteristics $F,\Phi$ are bounded.

Denoting  $X_n=x(t_n^-)$, the evolution of the continuous state vector of the IGO from one firing time to the next one obeys  the impulse-to-impulse map \cite{Aut09}
\begin{align}\label{eq:map}
    X_{n+1}&=Q(X_n),\\
    Q(\xi) &= \mathrm{e}^{A\Phi(C\xi)}\left( \xi+ F(C\xi)B \right).\nonumber
\end{align}

This paper focuses on periodic solutions of model \eqref{eq:1},\eqref{eq:2} that correspond to fixed points of the map $Q$. A periodic solution with exactly $m$ firings of the pulse-modulated feedback within the least period is called $m$-cycle. 
In particular, for a 1-cycle with the initial condition $X$, it applies
\begin{equation}\label{eq:1-cycle}
    X=Q(X).
\end{equation}
Since all the solutions of \eqref{eq:1}, \eqref{eq:2} are positive, it holds that $X>0$, where the inequality is understood element-wise. 
\begin{prop}[\cite{Aut09}]\label{th:1-cycle}
System~\eqref{eq:1},~\eqref{eq:2} has one and only one (positive) $1$-cycle, that is,~\eqref{eq:1-cycle} has
    a unique solution $X>0$. The cycle parameters $\lambda$, $T$, and $z_0$ can be evaluated by solving the following system of algebraic equations
    \begin{align}\label{eq.z0}
        z_0&=\lambda g_1 g_2 \sum_{i=1}^3 \frac{\alpha_i}{\mathrm{e}^{a_iT}-1}, \quad \alpha_i=\prod_{\substack{j=1\\j\ne i}}^3\frac{1}{a_j-a_i}, \\
        \lambda&= F(z_0), \quad T=\Phi(z_0).\label{eq.z0-2}
    \end{align}

The key idea of proving Theorem~\ref{th:1-cycle} in~\cite{Aut09} is to rewrite~\eqref{eq:1-cycle} in terms of the output variable $z=CX=x_3$  as
\begin{equation}\label{eq:1-cycle-equiv}
X=\mathrm{e}^{A\Phi(z)}\left(X + F(z)B \right),\;\;z=CX,
\end{equation}
which equation is subsequently reduced to the scalar equation
\begin{equation*}
z=C(\mathrm{e}^{-A\Phi(z)}-I)^{-1}BF(z).
\end{equation*}
The right-hand side of this equation is a decreasing bounded function of $z>0$, being strictly positive as $z\to 0+$~\cite{Aut09}, which entails the existence and uniqueness of the solution.

The  1-cycle above is orbitally asymptotically stable \cite{Aut09} if and only if the fixed point $X$ is asymptotically stable as the equilibrium of discrete-time dynamics~\eqref{eq:map}, that is, the Jacobian matrix $Q'(X)$ is Schur 
stable\footnote{A square matrix is said to be Schur (Schur stable) if all its eigenvalues $\lambda_j$ belong to the unit disk $|\lambda_j|<1$.}, where
\begin{equation}\label{eq:jacobian}
    Q^\prime(X)= \mathrm{e}^{A\Phi(z_0)}\left( I+ F^\prime(z_0)BC\right)+ \Phi^\prime(z_0)AX C.
\end{equation}
\end{prop}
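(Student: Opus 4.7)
The plan is to reduce the vector fixed-point condition $X=Q(X)$ to a scalar equation in the output $z_0=CX$, from which both existence/uniqueness of the 1-cycle and the algebraic relations~\eqref{eq.z0},~\eqref{eq.z0-2} drop out; the Jacobian formula~\eqref{eq:jacobian} and the orbital-stability characterization then follow by matrix calculus and a standard Poincar\'e-map argument.

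First I would substitute $T=\Phi(z_0)$ and $\lambda=F(z_0)$ into $X=\mathrm{e}^{AT}(X+\lambda B)$. Since $A$ is Hurwitz, $\mathrm{e}^{AT}$ has spectral radius strictly less than one, so $I-\mathrm{e}^{AT}$ is invertible and a Neumann expansion yields
\[
X=\lambda(I-\mathrm{e}^{AT})^{-1}\mathrm{e}^{AT}B=\lambda\sum_{k=1}^{\infty}\mathrm{e}^{AkT}B.
\]
Applying $C$ gives the scalar identity $z_0=\lambda G(T)$, where $G(T):=\sum_{k\ge 1}C\mathrm{e}^{AkT}B$. To obtain~\eqref{eq.z0} explicitly, I would diagonalize $A$ (its eigenvalues are the distinct numbers $-a_1,-a_2,-a_3$) and use the partial-fraction expansion of the transfer function $C(sI-A)^{-1}B=g_1g_2\prod_{i=1}^{3}(s+a_i)^{-1}$; this gives $C\mathrm{e}^{At}B=g_1g_2\sum_i\alpha_i\mathrm{e}^{-a_it}$ with the $\alpha_i$ of~\eqref{eq.z0}, and summing the geometric series $\sum_{k\ge 1}\mathrm{e}^{-a_ikT}=1/(\mathrm{e}^{a_iT}-1)$ recovers~\eqref{eq.z0}; combined with $T=\Phi(z_0)$ and $\lambda=F(z_0)$ this is~\eqref{eq.z0-2}.

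For existence and uniqueness I would rearrange the identity as $z_0/F(z_0)=G(\Phi(z_0))$. The left-hand side is continuous, tends to $0$ as $z_0\to 0^+$, and is strictly increasing on $(0,\infty)$ because $F$ is positive and non-increasing. The right-hand side is continuous, bounded, strictly positive, and non-increasing, since $\Phi$ is non-decreasing and $G$ is strictly decreasing in $T$. Opposite monotonicities together with the boundary behaviour force exactly one positive crossing, which determines $z_0$ uniquely. Element-wise positivity of the corresponding $X$ then follows from $A$ being Metzler (so every $\mathrm{e}^{AkT}$ is component-wise nonnegative) and from $B$ nonnegative with the strict positivity of $g_1,g_2$ propagating excitation to all three coordinates.

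For the Jacobian I would apply the chain and product rules to $Q(\xi)=\mathrm{e}^{A\Phi(C\xi)}(\xi+F(C\xi)B)$. Differentiating the second factor contributes the summand $\mathrm{e}^{A\Phi(C\xi)}(I+F'(C\xi)BC)$; differentiating $\mathrm{e}^{A\Phi(C\xi)}$, applying the result to $(\xi+F(C\xi)B)$, and using that $\mathrm{e}^{A\Phi(z_0)}(X+F(z_0)B)=X$ at the fixed point produces the outer-product contribution $\Phi'(z_0)AXC$, yielding~\eqref{eq:jacobian}. The stability equivalence is the standard Poincar\'e return-map argument: in a neighbourhood of the cycle the hybrid flow is conjugate to iterates of the smooth map $Q$, so orbital asymptotic stability of the 1-cycle is equivalent to asymptotic stability of $X$ under $Q$, which by Hartman--Grobman is in turn equivalent to $Q'(X)$ being Schur. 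I expect the main obstacle to be the strict monotonicity of $G$: the residues $\alpha_i$ alternate in sign, so the decay of $G(T)$ is not term-wise evident from~\eqref{eq.z0} and must be extracted either from the Neumann/Metzler representation of $G$ or from the analysis in~\cite{Aut09}; everything else is routine matrix-exponential calculus and the Poincar\'e correspondence for pulse-modulated hybrid systems.
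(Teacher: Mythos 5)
Your proposal follows essentially the same route as the paper: both reduce $X=Q(X)$ to a scalar equation in $z_0=CX$, obtain \eqref{eq.z0} from the partial-fraction/residue expansion of $C\e^{At}B$, settle existence and uniqueness by a monotonicity argument (you move $F$ to the left-hand side, while the paper keeps the product $C(\e^{-A\Phi(z)}-I)^{-1}BF(z)$ on the right and calls it decreasing), and recover \eqref{eq:jacobian} by the product rule together with the fixed-point identity. The monotonicity of $T\mapsto C(\e^{-AT}-I)^{-1}B$ that you correctly flag as the one non-routine step is likewise not proved in this paper but deferred to~\cite{Aut09}, so your write-up is at the same level of completeness as the paper's own sketch.
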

\vskip0.1cm



\section{DESIGN}\label{sec.design}
The IGO \emph{design} problem treated here is formulated  in the following way. Suppose that the dynamics of  \eqref{eq:1} given by the matrix $A$ are known. In drug dosing, the elements of $x(t)$  can belong to, e.g., a known pharmacokinetic-pharmacodynamic model~\cite{Runvik:2020}.
Given the parameters of a 1-cycle, the IGO design task is to find the modulation functions that render, with orbital stability, the desired periodic solution.

In terms of the model parameters, the problem in question can be summarized as follows. Given the parameters $a_1,a_2,a_3,g_1$, find  $\Phi(\cdot), F(\cdot)$ that provide the desired characteristics of a stable 1-cycle $\lambda, T$. In the design procedure proposed below, $g_2>0$ always appears in product with $\lambda$ and can be selected as an arbitrary constant.

From \eqref{eq.z0-2} and \eqref{eq:jacobian}, the conditions for  1-cycle existence and stability in the IGO involve $z_0$, i.e. the output value at the fixed point $X$ in~\eqref{eq:1-cycle}. Therefore, the modulation functions, as such, cannot be obtained in the design procedure, but only interpolation conditions that they and their derivatives have to satisfy to achieve the desired solution.

\subsection{Divided differences and the Opitz formula}

To evaluate of a function $f(\cdot)$ of the matrix $A$, the so-called Opitz formula will be used in the analysis to follow. The complex-valued function $f(\cdot)$ is assumed to be well-defined and complex-analytic in a vicinity of the matrix spectrum $\sigma(A)=\{-a_1,-a_2,-a_3\}$ where the eigenvalues are pairwise different; See  \cite{PRM23} for a more general case. 

The first divided difference (1-DD) of a function $f$ is introduced \cite{D03,DeBoor2005} as a function of two variables 
\[
f[z_0,z_1]\triangleq \frac{f(z_1)-f(z_0)}{z_1-z_0},
\]
which expression is well defined if and only if $f(z_1)$, $f(z_0)$ exist and $z_0\ne z_1$. The second divided difference (2-DD) is a  function of three variables and is defined by
\[
f[z_0,z_1,z_2]\triangleq\frac{f[z_1,z_2]-f[z_0,z_1]}{z_2-z_0},
\]
where $f(z_0),f(z_1),f(z_2)$ exist and $z_0,z_1,z_2$ are pairwise different.
Remarkably, both 1-DD and 2-DD are \emph{symmetric} functions. Furthermore, for a scalar $\xi\ne 0$ and $f_{\xi}(z)\triangleq f(z\xi)$, it holds
\[
f_{\xi}[z_0,z_1]=\xi f[z_0\xi,z_1\xi],\;f_{\xi}[z_0,z_1,z_2]=\xi^2f[z_0\xi,z_1\xi,z_2\xi].
\]
After some computations, it can be shown that
\[
f[z_0,z_1,z_2]=\sum_{i=0}^2\beta_if(z_i),\quad
\beta_i=\prod_{\substack{j=1\\j\ne i}}^3\frac{1}{z_j-z_i}.
\]

The Lagrange mean value theorem implies that if $f(\cdot)$ attains \emph{real} values on some real interval $I=(\alpha,\beta)$, then, for each $z_0,z_1\in I$, $z_0<z_1$
there exists $\zeta\in[z_0,z_1]$ such that
$
f[z_0,z_1]=f'(\zeta).
$
A similar result can be proved for the 2-DD~\cite[Corollary to Proposition~43]{DeBoor2005}: for each triple $z_0,z_1,z_2\in I$, one has
\[
f[z_0,z_1,z_2]=\frac{1}{2}f''(\zeta),\;\;\zeta\in[\min_i z_i,\max_i z_i].
\]

For matrices of dimension three, a generalized\footnote{Typically, the Opitz formula is considered for the situation where the second main diagonal contains ones, that is, $g_1=g_2=1$, the general case can be derived by a simple similarity transformation.} Opitz formula in \cite{PRM23} gives the closed-form representation of $f(A)$
\begin{equation*}
f(A)=
\begin{bmatrix}
f(-a_1) & 0 & 0\\
g_1f[-a_1,-a_2] & f(-a_2) & 0\\
g_1g_2f[-a_1,-a_2,-a_3] & g_2f[-a_2,-a_3] & 0
\end{bmatrix}.
\end{equation*}

For instance, one may compute explicitly the evolutionary matrix of the linear system \eqref{eq:1} by applying the Opitz formula to
function $f(z)=\exp(zt)$, where $t\in\mathbb{R}$ is constant:
\begin{align*}
&\exp(At)=\\
&\left[\begin{array}{c:c:c}
    \e^{-a_1t} &0 &0\\
    g_1t \e \lbrack -a_1t, -a_2t \rbrack &\e^{-a_2t} &0\\
    g_1g_2t^2 \e \lbrack -a_1t, -a_2t,  -a_3t\rbrack & g_2t\e \lbrack -a_2t, -a_3t \rbrack & \e^{-a_3t}
\end{array}\right].
\end{align*}
Here, following standard notation, we use $\e[z_0,z_1]$ to denote the 1-DD of the exponential function $\e^z=\exp(z)$; the same applies to the 2-DD $\e[z_0,z_1,z_2]$.

By virtue of the mean value theorem, all divided differences of the exponential function are positive. Subsequently, all the elements of $\exp At$ are non-negative.  This is well in line with the fact of $A$ being Metzler. The obtained expression for the transition matrix generalizes to higher dimensions of the continuous dynamics when the two-diagonal structure of the matrix $A$ is preserved~\cite{PRM23}.




\subsection{Fixed point}

Proposition~\ref{th:1-cycle}, combined with the Opitz formula, enables the calculation of the  parameters of the unique 1-cycle for a given model of the IGO. The 1-cycle corresponds to a fixed point of the map $Q(\cdot)$, according to \eqref{eq:1-cycle}. 
The following converse statement, 
yielding the fixed point for a set of 1-cycle parameters,  can then be proven. 

Denote, for brevity,
\[
\mu(z)\triangleq \frac{1}{\e^{-z}-1}=\frac{\e^z}{1-\e^z}, \quad z\ne 0.
\]

\begin{prop}\label{th:fp}
    Given the parameters of $1$-cycle $T>0$, $\lambda>0$, the fixed point $X>0$ of map $Q$ from~\eqref{eq:map} is calculated as
\begin{align}\label{eq:x0}
     x_{1}&=\lambda\mu(-a_1T)=\frac{\lambda\mathrm{e}^{-a_1T}}{1-\mathrm{e}^{-a_1T}}, \nonumber\\
     x_{2}&=\lambda g_1T\mu[-a_1T,-a_2T]=\nonumber\\
     &=\frac{\lambda g_1 T\e\lbrack -a_1T,-a_2T \rbrack}{(1-\mathrm{e}^{-a_1T})(1-\mathrm{e}^{-a_2T})}, \\
     x_{3}&=\lambda g_1g_2T^2\mu[-a_1T,-a_2T,-a_3T]=\nonumber\\
     &=
           \frac{\lambda g_1 g_2 T^2}{(1-\mathrm{e}^{-a_1T})(1-\mathrm{e}^{-a_2T})(1-\mathrm{e}^{-a_3T})}\times\\
           &\times\Big(\e\lbrack -a_1T,-a_2T, -a_3T \rbrack\nonumber \\
           &+\e\lbrack -(a_1+a_2)T,-(a_1+a_3)T, -(a_2+a_3)T \rbrack\Big).\nonumber
 \end{align}
\end{prop}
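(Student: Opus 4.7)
The plan is first to reduce the fixed-point equation~\eqref{eq:1-cycle} to a linear matrix equation. Along the $1$-cycle one has $\lambda=F(z_0)$ and $T=\Phi(z_0)$, so~\eqref{eq:1-cycle-equiv} simplifies to $X=\mathrm{e}^{AT}(X+\lambda B)$. As the eigenvalues of $A$ are $-a_i<0$, those of $\mathrm{e}^{AT}$ lie in $(0,1)$, so $I-\mathrm{e}^{AT}$ is invertible and
\[
X=\lambda\,(I-\mathrm{e}^{AT})^{-1}\mathrm{e}^{AT}B=\lambda\,f(A)B,\qquad f(z):=\mu(zT)=\frac{\mathrm{e}^{zT}}{1-\mathrm{e}^{zT}}.
\]
This rewrites $X$ as a matrix-function value to which the Opitz formula of the preceding subsection applies, since $f$ is complex-analytic on a neighbourhood of $\sigma(A)$.

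Next I would apply the Opitz formula. Because $B=[1,0,0]^\top$, the vector $f(A)B$ is the first column of $f(A)$, giving
\[
X=\lambda\bigl[\,f(-a_1),\;g_1\,f[-a_1,-a_2],\;g_1g_2\,f[-a_1,-a_2,-a_3]\,\bigr]^\top.
\]
With $f(z)=\mu(zT)$, the scaling identities $f[z_0,z_1]=T\,\mu[Tz_0,Tz_1]$ and $f[z_0,z_1,z_2]=T^2\,\mu[Tz_0,Tz_1,Tz_2]$ (the case $\xi=T$ of the rules recalled earlier) convert the three entries into $\lambda\mu(-a_1T)$, $\lambda g_1T\mu[-a_1T,-a_2T]$ and $\lambda g_1g_2T^2\mu[-a_1T,-a_2T,-a_3T]$, which are precisely the first (compact) expressions in~\eqref{eq:x0}. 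Positivity $X>0$ follows because $\mu'(z)=\mathrm{e}^z/(1-\mathrm{e}^z)^2>0$ on $\mathbb{R}\setminus\{0\}$ and $\mu''>0$ on the negative axis, so the mean value theorem for 1-DD and 2-DD forces all three entries to be strictly positive.

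It remains to rewrite each divided difference of $\mu$ in terms of divided differences of the exponential. The entry $x_1$ is immediate, and the equality in the closed form of $x_2$ follows from a one-line common-denominator manipulation of $(\mu(z_2)-\mu(z_1))/(z_2-z_1)$ at $z_i=-a_iT$. The only non-routine step, and the main obstacle, is the underlying identity for $x_3$, namely
\[
\prod_{i=1}^{3}(1-\mathrm{e}^{z_i})\,\mu[z_1,z_2,z_3]=\mathrm{e}[z_1,z_2,z_3]+\mathrm{e}[s_{12},s_{13},s_{23}],
\]
with $s_{ij}:=z_i+z_j$. To prove it I would expand $\mu[z_1,z_2,z_3]$ by the barycentric formula $\sum_{i=1}^{3}\mu(z_i)\prod_{j\ne i}(z_j-z_i)^{-1}$, multiply through by $\prod_i(1-\mathrm{e}^{z_i})$, and expand each remaining factor $\prod_{j\ne i}(1-\mathrm{e}^{z_j})$ into its four monomials $1,-\mathrm{e}^{z_j},-\mathrm{e}^{z_k},\mathrm{e}^{s_{jk}}$. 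Grouping the resulting weighted monomials by exponent, the $\mathrm{e}^{z_1+z_2+z_3}$-coefficient vanishes (it equals the 2-DD of the constant function $1$), the three singletons $\mathrm{e}^{z_i}$ reassemble into $\mathrm{e}[z_1,z_2,z_3]$, and the three pairs $\mathrm{e}^{s_{ij}}$ collapse, via the elementary simplification $\tfrac{1}{z_j-z_i}\bigl(\tfrac{1}{z_k-z_j}-\tfrac{1}{z_k-z_i}\bigr)=\tfrac{1}{(z_k-z_i)(z_k-z_j)}$, to exactly the barycentric weights of $\mathrm{e}[s_{12},s_{13},s_{23}]$. Substituting this identity into the compact form of $x_3$ yields its second (explicit) expression in~\eqref{eq:x0}, completing the proof.
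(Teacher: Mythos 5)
Your proposal is correct and follows essentially the same route as the paper: reduce the fixed-point equation to $X=\lambda(\e^{-AT}-I)^{-1}B=\lambda\mu(AT)B$, read off the first column via the Opitz formula, and use the scaling rule for divided differences. You additionally supply (correctly) the computation the paper omits as ``straightforward'' --- in particular the identity $\prod_{i}(1-\e^{z_i})\,\mu[z_1,z_2,z_3]=\e[z_1,z_2,z_3]+\e[z_1+z_2,z_1+z_3,z_2+z_3]$ and the positivity of $X$ via the mean value theorem --- so your write-up is, if anything, more complete than the published proof.
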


\begin{proof}
For $\Phi(CX)=\Phi(x_{03})=T$ and $F(CX)=F(x_{03})=\lambda$, $X$ is a given fixed point satisfying \eqref{eq:1-cycle-equiv} if and only if
\[
X=\lambda(\e^{-AT}-I)^{-1}B=\lambda\mu(AT)B,
\]
that is, $X$ is the first column of the matrix $\mu(AT)$.  The leftmost equalities in~\eqref{eq:x0}, relating $x_{0i}, i=1,2,3$ to the divided
differences of $\mu$, follow immediately from the Opitz formula. The 
rightmost equalities are validated by a straightforward computation, which is omitted here.
\end{proof}


 Proposition~\ref{th:fp} implies that $z_0=x_{03}$ can be calculated for any choice of the distinct constants $a_1,a_2,a_3$, which fact perfectly agrees with the result of Proposition~\ref{th:1-cycle}. Then, for a given continuous part of the IGO in \eqref{eq:1} and desired $\lambda, T$, the value of $z_0$ is obtained by specifying the values of the modulation functions at that point according to \eqref{eq.z0-2}. Further, since a 1-cycle is uniquely defined by the fixed point, the elements of the matrix $A$ and $\lambda, T$ stipulate the periodic solution of the IGO. 
 

\subsection{Stability of $1$-cycle}
Proposition~\ref{th:fp} specifies the fixed point corresponding to the desired periodic solution but does not guarantee its stability.  Then, additionally, matrix~\eqref{eq:jacobian} needs to be stable to ensure that the 1-cycle is relevant in feedback control context.

In the design problem at hand, the slopes of the modulation functions $F(\cdot)$, $\Phi(\cdot)$ at the fixed point corresponding to the desired 1-cycle constitute the degrees of freedom that can be utilized for the stabilization of the periodic solution. As the result below explicates,  the design problem is similar to what is known as static output feedback stabilization in linear time-invariant (LTI) systems \cite{SAD97}.

\begin{prop}\label{th:feedback} Jacobian \eqref{eq:jacobian} at the fixed point $X$ admits the parameterization
    \begin{equation*}
        Q^\prime(X)= \e^{A\Phi(z_0)}+ \left( F^\prime(z_0)J+\Phi^\prime(z_0)D\right)C,
    \end{equation*}
    where  $J,D\in \mathbb{R}^3$ and $J=\e^{A\Phi(z_0)}B>0$, $D=AX<0$, $z_0=CX=x_{03}$.
\end{prop}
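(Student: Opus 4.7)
I would split the argument into (a) the algebraic rewriting of the Jacobian and (b) the two elementwise sign claims. The first piece is immediate: distributing in \eqref{eq:jacobian}, $\e^{A\Phi(z_0)}(I+F'(z_0)BC)=\e^{A\Phi(z_0)}+F'(z_0)\bigl(\e^{A\Phi(z_0)}B\bigr)C=\e^{A\Phi(z_0)}+F'(z_0)JC$, while the remaining summand is $\Phi'(z_0)AXC=\Phi'(z_0)DC$ by the definitions $J=\e^{A\Phi(z_0)}B$ and $D=AX$. Adding the two produces the claimed form.

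\textbf{Positivity of $J$.} With $t:=\Phi(z_0)\ge\Phi_1>0$, the vector $J$ is by definition the first column of $\e^{At}$. Applying the Opitz formula to the exponential, as already displayed in the preceding subsection, this column reads $\bigl(\e^{-a_1 t},\; g_1 t\,\e[-a_1 t,-a_2 t],\; g_1 g_2 t^2\,\e[-a_1 t,-a_2 t,-a_3 t]\bigr)^{\top}$. The paper has already observed that the mean value theorems for divided differences force every divided difference of the exponential to be strictly positive, hence $J>0$ componentwise.

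\textbf{Negativity of $D$.} Use the representation $X=\lambda(\e^{-AT}-I)^{-1}B$ derived in the proof of Proposition~\ref{th:fp}; multiplying by $A$ yields $D=AX=\lambda f(A)B$ with the scalar symbol $f(z):=z/(\e^{-zT}-1)$, which is analytic in a neighborhood of $\sigma(A)$. A second use of the Opitz formula identifies the three components of $D$ as $\lambda$ times $f(-a_1)$, $g_1 f[-a_1,-a_2]$, and $g_1g_2 f[-a_1,-a_2,-a_3]$. The mean value theorems for the 1-DD and 2-DD then equate these to $\lambda f(\zeta_0)$, $\lambda g_1 f'(\zeta_1)$, and $\tfrac{1}{2}\lambda g_1 g_2 f''(\zeta_2)$ at suitable points $\zeta_i\in(-\max_j a_j,-\min_j a_j)\subset(-\infty,0)$. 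The task therefore reduces to the scalar inequalities $f(z)<0$, $f'(z)<0$, and $f''(z)<0$ for every $z<0$.

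\textbf{Main obstacle.} The sign of $f$ itself is visible at a glance, since the numerator $z$ and denominator $\e^{-zT}-1$ have opposite signs for $z<0$. The conditions on $f'$ and $f''$ are the one genuinely analytic step and the place where I expect to spend most of the effort. After the substitution $u=-zT>0$, they become the elementary inequalities $\e^u(1-u)<1$ and $(2-u)\e^u<u+2$, respectively; each of these can be verified by differentiating once or twice, checking the equality at $u=0$, and observing that the derivative stays positive on $u>0$. Once these scalar facts are secured, combining them with the Opitz/MVT reduction above yields $D<0$ componentwise and completes the proof of the proposition.
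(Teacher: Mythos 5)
Your proposal is correct and follows essentially the same route as the paper: the same algebraic regrouping, the same Opitz-formula identification of $J$ and $D$ as first columns of matrix functions, and the same mean-value-theorem reduction of the sign of $D$ to the negativity of a scalar function and its first two derivatives on $(-\infty,0)$ (your $f(z)=z/(\e^{-zT}-1)$ is just the paper's $\nu$ up to the rescaling $z\mapsto zT$). The only difference is that where the paper merely asserts that $\nu$ is negative, decreasing, and strictly concave (pointing to a figure), you actually reduce these claims to the elementary inequalities $\e^u(1-u)<1$ and $(2-u)\e^u<u+2$ for $u>0$, both of which check out, so your write-up is if anything more complete.
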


\begin{proof}
The expression for $Q'(X_0)$ and formulas for $J,D$ are straightforward from~\eqref{eq:jacobian}. Furthermore, since $g_1,g_2>0$ and all divided differences of the exponential functions are positive, the 
formula for $\e^{At}$ derived in Section~III-A ensures that the vector $J$, being the first column of the matrix $\e^{A\Phi(z_0)}$, is strictly positive. In order to prove that $D=AX<0$, notice that
\[
D=A(\e^{-A\Phi(z_0)}-I)^{-1}B.
\] 
Introducing the function
\[
\nu(z)\triangleq z\mu(z)=\frac{z}{e^{-z}-1},
\]
one notices that $D=T^{-1}\nu(TA)B$ is nothing else but the first column of the matrix $T^{-1}\nu(TA)$. It can be demonstrated that the function $\nu$ (see Fig.~\ref{fig.nu}) is negative, decreasing, and strictly concave on the interval $z\in[-\infty,0)$. Hence, in view of the mean value theorem, the divided differences $\nu[-a_1T,-a_2T]$, $\nu[-a_2T,-a_3T]$, $\nu[-a_1T,-a_2T,-a_3T]$ are all negative, as well the values $\nu(-a_iT)$.
In virtue of the Opitz formula, $\nu(TA)B<0$, entailing 
that $D<0$ and concluding the proof of Proposition~\ref{th:feedback}.
\end{proof}
\begin{figure}[htb]
\includegraphics[width=\columnwidth]{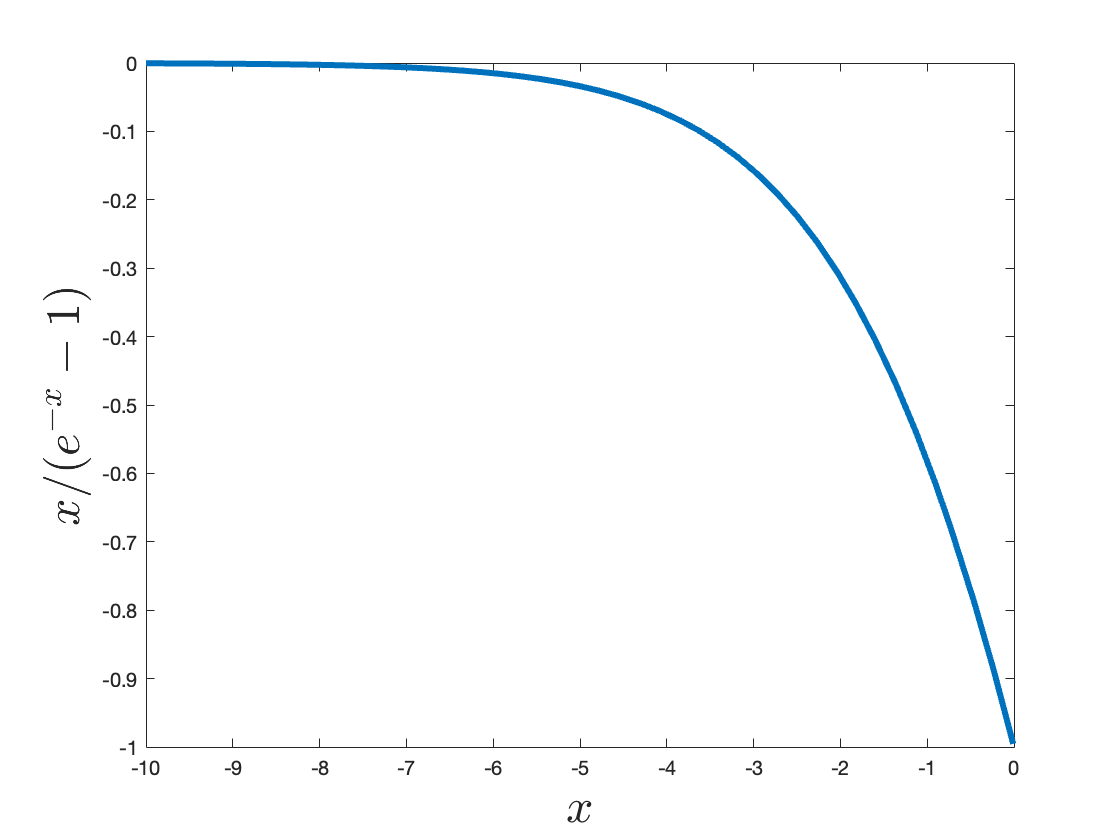}
\caption{The plot of function $\nu(x)$ for $x<0$.}\label{fig.nu}
\end{figure}


From the result of Proposition~\ref{th:feedback}, $Q^\prime(X)$ can be rendered Schur stable by the feedback gain $K\in \mathbb{R}^3$
\begin{equation}\label{eq:closed_loop}
    Q^\prime(X)= \e^{A\Phi(z_0)}+KC,
\end{equation}
subject to
\begin{equation}\label{eq:gain}
    K= \begin{bmatrix}
    J &D
\end{bmatrix}\begin{bmatrix}
 F^\prime(z_0)  \\  \Phi^\prime(z_0) 
\end{bmatrix}.
\end{equation}
Since the pair $(\e^{A\Phi(z_0)}, C)$ is observable, an arbitrary eigenvalue spectrum of $Q^\prime(X)$ can be achieved with an unrestricted gain $K$.  However, due to \eqref{eq:gain}, $K$ has to be a linear combination of $J$ and $D$ with the coefficients  $F^\prime(z_0)\le 0$ and $ \Phi^\prime(z_0)\ge 0$, correspondingly. This feedback structure also appears in the classical problem of static output feedback design, see \cite{SAD97} for an overview.
A crucial distinction between the static output feedback in an LTI system and the pulse-modulated feedback of the IGO is that the former operates around a (constant) output setpoint whereas the latter stabilizes an LTI along a periodic solution (a 1-cycle) expressed as a  fixed point.

\begin{remark} The last statement of Proposition~\ref{th:feedback} entails that
$JF^\prime(z_0) + D\Phi^\prime(z_0)\leq 0$, for all feasible values of $F^\prime(z_0),\Phi^\prime(z_0)$. Therefore, the feedback in the IGO is {\it negative}, despite the fact that all the involved quantities are {\it positive}.
This property is natural given the underlying principle of the pulse-modulated feedback in the IGO where the impulses become of lower weight and sparser when the output values are higher than $z_0$.
\end{remark}

It can also be noticed that the pair of slopes $F^\prime(z_0)=0,\Phi^\prime(z_0)=0$ yields in the Schur stable matrix $Q'(X)$. Even though constant modulation functions formally produce  a stable 1-cycle, the feedback in the IGO is essentially eliminated, and the impulsive sequence is independent of the measured output.

\begin{lem}[Theorem~3.1, \cite{FGL98}]\label{lm:Schur}
Let $A=\lbrack a_{ij}\rbrack_{i,j=1}^3$ be a real matrix. Denote $M(A)=m_{11}(A)+m_{22}(A)+m_{33}(A)$, where $m_{ii}(A)$ stand for
the principle minors
\begin{align*}
    m_{11}(A)&= a_{22}a_{33}-a_{23}a_{32},\\
     m_{22}(A)&= a_{11}a_{33}-a_{31}a_{13}, \\
      m_{33}(A)&= a_{11}a_{22}-a_{21}a_{12}.
\end{align*}
Then, matrix $A$ is Schur stable if and only if the following three conditions are satisfied:
\begin{enumerate}
    \item $| \det A |< 1$,
    \item $ | \tr A+ \det A |<1+ M(A) $,
    \item $ |\tr A\det A - M(A)|< 1- \det^2 A$.
\end{enumerate}
\end{lem}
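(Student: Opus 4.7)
The plan is to reduce Schur stability of $A$ to the location of the roots of its characteristic polynomial and then apply the Jury (Schur--Cohn) test for cubic polynomials. The first step is to identify the characteristic polynomial: for a real $3\times 3$ matrix,
\begin{equation*}
p(\lambda) = \det(\lambda I - A) = \lambda^3 - (\tr A)\,\lambda^2 + M(A)\,\lambda - \det A,
\end{equation*}
because the coefficient of $\lambda^{3-k}$ is $(-1)^k$ times the sum of all $k\times k$ principal minors of $A$, and for $k=2$ that sum is exactly $m_{11}(A)+m_{22}(A)+m_{33}(A)=M(A)$.

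The second step is to invoke the Jury stability criterion for a monic cubic $p(\lambda)=\lambda^3+b_2\lambda^2+b_1\lambda+b_0$, which states that all three roots lie strictly inside the unit disk if and only if
\begin{equation*}
p(1)>0,\quad p(-1)<0,\quad |b_0|<1,\quad 1-b_0^2>|b_1-b_0 b_2|.
\end{equation*}
Substituting $b_2=-\tr A$, $b_1=M(A)$, $b_0=-\det A$, the first two inequalities become
\begin{equation*}
1-\tr A+M(A)-\det A>0 \quad\text{and}\quad 1+\tr A+M(A)+\det A>0,
\end{equation*}
which together are equivalent to the single bound $|\tr A+\det A|<1+M(A)$, i.e., condition~(2) of the lemma. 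The third Jury inequality is condition~(1), and the fourth rearranges, using $b_0 b_2=\tr A\det A$ and $b_1=M(A)$, to $1-\det^2 A>|\tr A\det A - M(A)|$, which is condition~(3). The converse direction is obtained by retracing these equivalences.

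The main obstacle I anticipate is twofold. First, the Jury criterion itself should be justified rather than merely invoked; the cleanest route is the bilinear transformation $\lambda=(1+w)/(1-w)$, which maps the open unit disk biholomorphically onto the open left half-plane and sends $p(\lambda)$ to a cubic in $w$ whose Hurwitz stability can be tested with the Routh array. After some algebra, the Routh conditions collapse to exactly the four inequalities above. Second, care is required in handling the absolute values: condition~(3) already implies $1-\det^2 A>0$, so condition~(1) is technically redundant once~(3) holds; nevertheless, retaining it is natural because it is the simplest algebraic check and renders the three conditions convenient to verify sequentially.
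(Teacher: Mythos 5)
Your derivation is correct, but note that the paper offers no proof of this lemma at all: it is imported verbatim as Theorem~3.1 of the cited reference \cite{FGL98}, so there is no internal argument to compare against. What you supply is a self-contained justification along the standard route: you correctly identify $\det(\lambda I-A)=\lambda^3-(\tr A)\lambda^2+M(A)\lambda-\det A$, observing that $M(A)$ is precisely the sum of the $2\times 2$ principal minors (the second elementary symmetric function of the eigenvalues), and then translate the four Jury--Schur--Cohn inequalities for a monic cubic into the three conditions of the lemma; the algebra checks out, including the consolidation of $p(1)>0$ and $p(-1)<0$ into the single bound $|\tr A+\det A|<1+M(A)$, and your remark that condition~(1) is logically subsumed by condition~(3) (since the right-hand side $1-\det^2A$ must be positive) is accurate. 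The one point where your argument leans on an unproved ingredient is the Jury criterion itself; your sketch via the bilinear map $\lambda=(1+w)/(1-w)$ followed by the Routh--Hurwitz test is the standard and workable way to close that gap, though carrying out that algebra explicitly would be needed for a fully self-contained proof. In short: the paper buys the result by citation, whereas your version makes the lemma verifiable from first principles at the cost of invoking (or re-deriving) one classical polynomial-stability test.
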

\vskip0.2cm

To analyse the Schur stability of matrix~\eqref{eq:jacobian}, one can find the characteristics employed by Lemma~\ref{lm:Schur} as functions of
$\Phi^\prime(z_0)$, $F^\prime(z_0)$. 
For instance, applying~\eqref{eq:closed_loop} and the well-known Schur complement formula $\det(I+XY)=\det(I+YX)$, where $XY,YX$ are square matrices, but $X,Y$ need not be square, one has
\[
\begin{split}
\det Q'(X)=\det(\e^{AT})\det(I_3+\e^{-AT}KC)=\\=\e^{-(a_1+a_2+a_3)T}(1+C\e^{-AT}K)=\\=\e^{-(a_1+a_2+a_3)T}\left(1+C\e^{-AT}[J,D]\left[\begin{smallmatrix}
F'(z_0)\\\Phi'(z_0)\end{smallmatrix}\right]\right)=\\
\e^{-(a_1+a_2+a_3)T}(1+Ce^{-AT}D\Phi'(z_0)).
\end{split}
\]
To derive the latter equality, one has to notice that $C\e^{-AT}J=C\e^{-AT}\e^{AT}B=CB=0$.
Similarly, after some computations, one can obtain two remaining characteristics. We formulate 
the following proposition. 

\begin{prop}\label{th:Jacobian}
    For $Q^\prime(X)$ defined by \eqref{eq:jacobian}, it applies
    \begin{align*}
        \tr Q^\prime(X)&= \tr\e^{AT}+ C \begin{bmatrix}
            \e^{AT}B & AX\end{bmatrix} 
            \begin{bmatrix} F^\prime(z_0) \\ \Phi^\prime(z_0) \end{bmatrix}, \\
        \det Q^\prime(X)&= \e^{-(a_1+a_2+a_3)}T(1+C\e^{-AT}AX\Phi'(z_0)),\\
        M(Q^\prime(X))&= \e^{-(a_1+a_2)T}+\e^{-(a_1+a_3)T}+\e^{-(a_2+a_3)T}\\
        &+ \begin{bmatrix} \psi_1 & \psi_2\end{bmatrix}\begin{bmatrix} F^\prime(z_0) \\ \Phi^\prime(z_0) \end{bmatrix}, \\
       \psi_1 &= ( \e^{-a_1T}+\e^{-a_2T})j_3\\
              &- g_2T\Big(\e\lbrack -a_2T,-a_3T\rbrack j_2\\
              &+ g_1T \e \lbrack -a_1T,-a_2T,-a_3T\rbrack j_1\Big),\\
     \psi_2 &= ( \e^{-a_1T}+\e^{-a_2T})d_3\\
            &- g_2T\Big(\e\lbrack -a_2T,-a_3T\rbrack d_2\\
              &+ g_1T \e \lbrack -a_1T,-a_2T,-a_3T\rbrack d_1\Big).
    \end{align*}
    Here $j_i,d_i$ are the elements of vectors $J$ and $D$.
\end{prop}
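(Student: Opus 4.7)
The plan rests on the rank-one factorization $Q'(X)=\e^{AT}+KC$ established in Proposition~\ref{th:feedback}, with $T=\Phi(z_0)$ and $K=JF'(z_0)+D\Phi'(z_0)$. Each of the three characteristics decomposes into a piece depending only on $\e^{AT}$ plus a linear correction in the slopes $F'(z_0),\Phi'(z_0)$, so that the whole proof reduces to three short calculations that exploit $CB=CAB=0$ from~\eqref{CBLB}.

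For the trace, I would use linearity together with the cyclic identity $\tr(KC)=CK$, giving $\tr Q'(X)=\tr\e^{AT}+CK$. Splitting $CK=(CJ)F'(z_0)+(CD)\Phi'(z_0)$ and noting $CJ=C\e^{AT}B$, $CD=CAX$ recovers the stated row-vector $C[\e^{AT}B,\,AX]$.

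For the determinant, I would factor $Q'(X)=\e^{AT}(I+\e^{-AT}KC)$ and apply the Weinstein--Aronszajn identity $\det(I+\e^{-AT}KC)=1+C\e^{-AT}K$, yielding $\det Q'(X)=\e^{-(a_1+a_2+a_3)T}(1+C\e^{-AT}K)$. The coefficient of $F'(z_0)$ is $C\e^{-AT}J=C\e^{-AT}\e^{AT}B=CB=0$, so only the $\Phi'(z_0)$ contribution survives, matching the stated expression.

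For the sum of $2\times 2$ principal minors, I would exploit that $C=(0,0,1)$, so $KC$ alters only the third column of $Q'(X)$. Expanding $m_{11},m_{22},m_{33}$ directly, the third minor is untouched while the first two each acquire a single linear correction, giving
\[
M(Q'(X))=M(\e^{AT})+[(\e^{AT})_{11}+(\e^{AT})_{22}]k_3-(\e^{AT})_{32}k_2-(\e^{AT})_{31}k_1.
\]
The unperturbed term evaluates to $\e^{-(a_1+a_2)T}+\e^{-(a_1+a_3)T}+\e^{-(a_2+a_3)T}$ because $\e^{AT}$ is lower triangular with diagonal $\e^{-a_iT}$. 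Reading the third-row entries of $\e^{AT}$ from the Opitz formula of Section~III-A and substituting $k_i=j_iF'(z_0)+d_i\Phi'(z_0)$ produces exactly the $\psi_1,\psi_2$ given in the statement.

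The main obstacle is clerical rather than conceptual: assembling $\psi_1,\psi_2$ requires careful tracking of the factors $T$, $g_1$, $g_2$ attached to the divided-difference entries of the third row of $\e^{AT}$. All the structural work has already been carried out in Proposition~\ref{th:feedback}, and the present proof merely repackages $J,D$ through the three elementary identities above.
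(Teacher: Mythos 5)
Your proposal is correct and follows essentially the same route as the paper: the paper also works from the rank-one form $Q'(X)=\e^{AT}+KC$, derives the determinant via $\det(I+\e^{-AT}KC)=1+C\e^{-AT}K$ together with $C\e^{-AT}J=CB=0$, and leaves the trace and $M$ to "similar computations," which you carry out explicitly (and correctly) using $\tr(KC)=CK$ and the fact that $KC$ perturbs only the third column of the lower-triangular $\e^{AT}$.
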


\subsection{Design algorithm}\label{sec.algorithm}
The results of Section~\ref{sec.design} can be summarized in the form of the following procedure rendering the desired solution to the IGO.
\begin{enumerate}
    \item[Step 1:]\label{itm:S1} \  Select the desired 1-cycle's characteristics $\lambda$ and $T$.
    \item[Step 2:]\label{itm:S2} \ From plant model \eqref{eq:1}, obtain the parameters $a_1$, $a_2$, $a_3$ and $g_1$;  $g_2>0$ can be selected arbitrarily.
    \item[Step 3:]\label{itm:S3} \  Calculate the fixed point (and $z_0$) from \eqref{eq:x0}.
    \item[Step 4:]\label{itm:S4} \ Define the structure of the  modulation functions $F$, $\Phi$ and calculate their derivatives $F^\prime$, $\Phi^\prime$.
    \item[Step 5:]\label{itm:S5} \ Evaluate the three stability conditions specified in Lemma~\ref{lm:Schur} with respect to the Jacobian $Q^\prime(X)$ using the expressions of the matrix functions in Proposition~\ref{th:Jacobian}.
     \item[Step 6:]\label{itm:S6} \ By selecting the parameters of   the modulation functions, ensure that $F^\prime(z_0)$, $\Phi^\prime(z_0)$ satisfy the stability conditions of Step~5.
    \item[Step 7:]\label{itm:S7} \ By  scaling  the modulation functions, ensure the equalities $F(z_0)=\lambda$, $\Phi(z_0)=T$.
\end{enumerate}
\section{DESIGN EXAMPLE}\label{sec.numerical}
This section illustrates the use of the design algorithm  outlined in Section~\ref{sec.algorithm} by a numerical example worked out step-by-step.


Consider the  design of a 1-cycle with $\lambda=4.66$, $T=66.75$ (Step~1) in the IGO given by \eqref{eq:1}, \eqref{eq:2}, where $g_1=2.0$, $g_2=0.5s$, {$a_1 =0.08$},  $a_2 =0.15$, $p=2$, and
$a_3=0.12$, (Step~2). The corresponding fixed point (Step~3)
\[
X=\begin{bmatrix}
    0.0225 &0.6360 &6.8330
\end{bmatrix}^{\top},
\]
thus $z_0=6.8330$.
Following \cite{Aut09}, define the structure of the modulation functions (Step~4) as the Hill functions
\begin{gather}\label{eq:fi}
\Phi(z) = k_1+k_2\:\dfrac{(z/h_\Phi)^{p_\Phi}}{1+(z/h_\Phi)^{p_\Phi}},\\
F(z)=k_3+\dfrac{k_4}{1+(z/h_F)^{p_F}}.\notag
\end{gather}
The coefficients $k_i, i=1,\dots,4$ explicitly specify the values on the minimal and maximal dose as well as the minimal and maximal time interval between the doses
\[
  k_3 <F(z)< k_3+k_4, \quad k_1 <\Phi(z)< k_1+k_2.
\]
The parameters $k_2$ and $k_4$  also influence the derivatives of the modulation functions
\begin{gather*}
\Phi'(z)=\dfrac{k_2 p_\Phi z^{p_\Phi-1} h_\Phi^{-p_\Phi}}{{(1+(z/h_\Phi)^{p_\Phi})}^2},\;
F'(z)=-\dfrac{k_4 p_F z^{p_F-1} h^{-p_F}}{(1+(z/h_F)^{p_F})^2}.
\end{gather*}
Therefore, besides $k_2$, $k_4$, the derivatives are also defined by $h_\Phi$, $p_\Phi$, $h_F$, $p_F$. 

From the parameters of  continuous part \eqref{eq:1} and $T$, the stability conditions of the fixed point $X$ are evaluated. Then the involved functions of the Jacobian amount to
\begin{align*}
    \tr Q^\prime(X)&= 0.0052 +  1.4574 F'(z_0)   -0.5020\Phi'(z_0),\\
    \det Q^\prime(X)&= 7.1410\cdot 10^{-11}   -0.172\cdot10^{-14}\Phi'(z_0),\\
    M(Q^\prime(X))&= 2.1528\cdot 10^{-7} -0.1251\cdot 10^{-4} F'(z_0)\\
    &+ 0.1460\cdot 10^{-4} \Phi'(z_0).
\end{align*}
Notice that $M(Q^\prime(X))>0$ for all admissible values of $F'(z_0)$, $\Phi'(z_0)$.
Given the orders of the coefficients in the matrix functions of $Q^\prime(X)$, stability of $X$ is guaranteed if 
\begin{align*}
        |\tr Q^\prime(X)|&< 1+ M(Q^\prime(X)),\\
    M(Q^\prime(X))&<1,
\end{align*}
or, due to the positivity of $M(Q^\prime(X))$,
\begin{equation}\label{eq:bound}
|\tr Q^\prime(X)|< 1.
\end{equation}
The inequality above is satisfied for $F'(z_0)= -0.1143$, $\Phi'(z_0)= 2.2852$. 
As expected, stability condition \eqref{eq:bound} limits the derivatives of the modulation functions that act as feedback gains, cf. \eqref{eq:gain}.
Also  $h$, $p$ have to obey certain inequalities imposed by the parametrization in \eqref{eq:fi}.

Introduce the notation
\[
\eta_\Phi={\left(\frac{z_0}{h_\Phi}\right)}^{p_\Phi}, \quad \theta_\Phi=\frac{k_2p_\Phi}{2z_0\Phi^\prime(z_0)}.
\]
Then, for  $\Phi'(z)$ to take the desired value in $z_0$, it applies 
\[
\eta_\Phi^2+ 2( 1-\theta_\Phi)\eta_\Phi+1=0,
\]
and, therefore, 
\[
\eta_{\Phi,{1,2}}= \theta_\Phi-1\pm\sqrt{\theta_\Phi(\theta_\Phi-2)}.
\]
When it is guaranteed that
\begin{equation}\label{eq:p_Phi}
    p_\Phi > \frac{4z_0\Phi^\prime(z_0)}{k_2}>0,
\end{equation}
both $\eta_{\Phi,1}$ and $\eta_{\Phi,2}$ are positive, then
\[
h_{\Phi,1,2}=\frac{z_0}{\sqrt[{p_\Phi}]{\eta_{\Phi,1,2}}}.
\]

Similarly, with 
\[
\eta_F={\left(\frac{z_0}{h_F}\right)}^{p_F}, \quad \theta_F=\frac{k_4p_F}{2z_0 F^\prime(z_0)},
\]
one has
\[
\eta_F^2+ 2( 1+\theta_F)\eta_F+1=0,
\]
and then
\[
\eta_{F,1,2}= -(\theta_F+1)\pm\sqrt{\theta_F(\theta_F+2)}.
\]
When it is guaranteed that
\begin{equation}\label{eq:p_F}
    0<p_F<-\frac{4z_0F^\prime(z_0)}{k_4},
\end{equation}
both roots are positive. Notice that the condition
\[
\theta_F+1<0
\]
results in a weaker inequality
\[
0<p_F<-\frac{2z_0F^\prime(z_0)}{k_4}.
\]
Conditions \eqref{eq:p_Phi} and \eqref{eq:p_F} are satisfied (Step~5) for $p_\Phi=p_F=2$, thus yielding $h_\Phi=h_F=h=4.112$, $k_2 =40$, $k_4=2.0$. Now, $k_1=60$ and  $k_3 = 3.0$ ensures (Step~6) that
\[
F(z_0)=\lambda, \quad \Phi(z_0)=T.
\]
The closed orbit of the designed 1-cycle is depicted in Fig.~\ref{f81}, along with trajectories resulting from deviations in initial conditions for continuous part of the IGO \eqref{eq:1}. The evolution of the impulse weight (dose) sequence $\lambda_k$ (see \eqref{eq:2}) to the pre-defined 1-cycle amplitude $\lambda$ is depicted in Fig.~\ref{f82}. A series of interchanging overdosing and underdosing events asymptotically converges to the desired value. This behavior could not be predicted from the design procedure since only a stable 1-cycle is sought. 

\begin{figure}[ht]
\centering \centering
\includegraphics[width=0.5\linewidth]{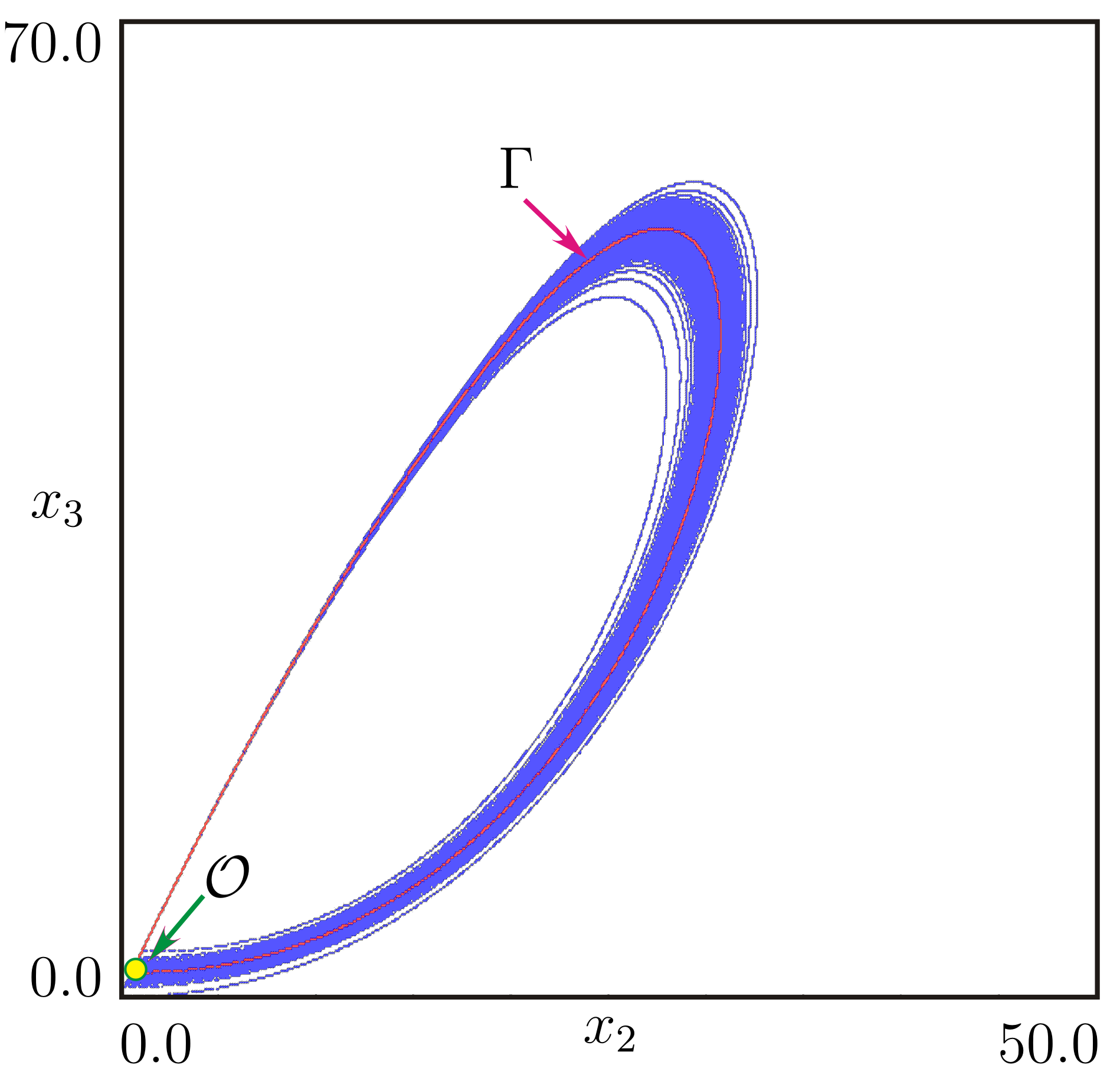}
 \caption{\label{f81}\small The designed 1-cycle ($\Gamma$, in red)  corresponding to the fixed point ($\mathcal{O}=X$). 
 Trajectories converging to $\Gamma$ are in blue.
 }
\end{figure}

\begin{figure}[ht]
\centering \centering
\includegraphics[width=0.6\linewidth]{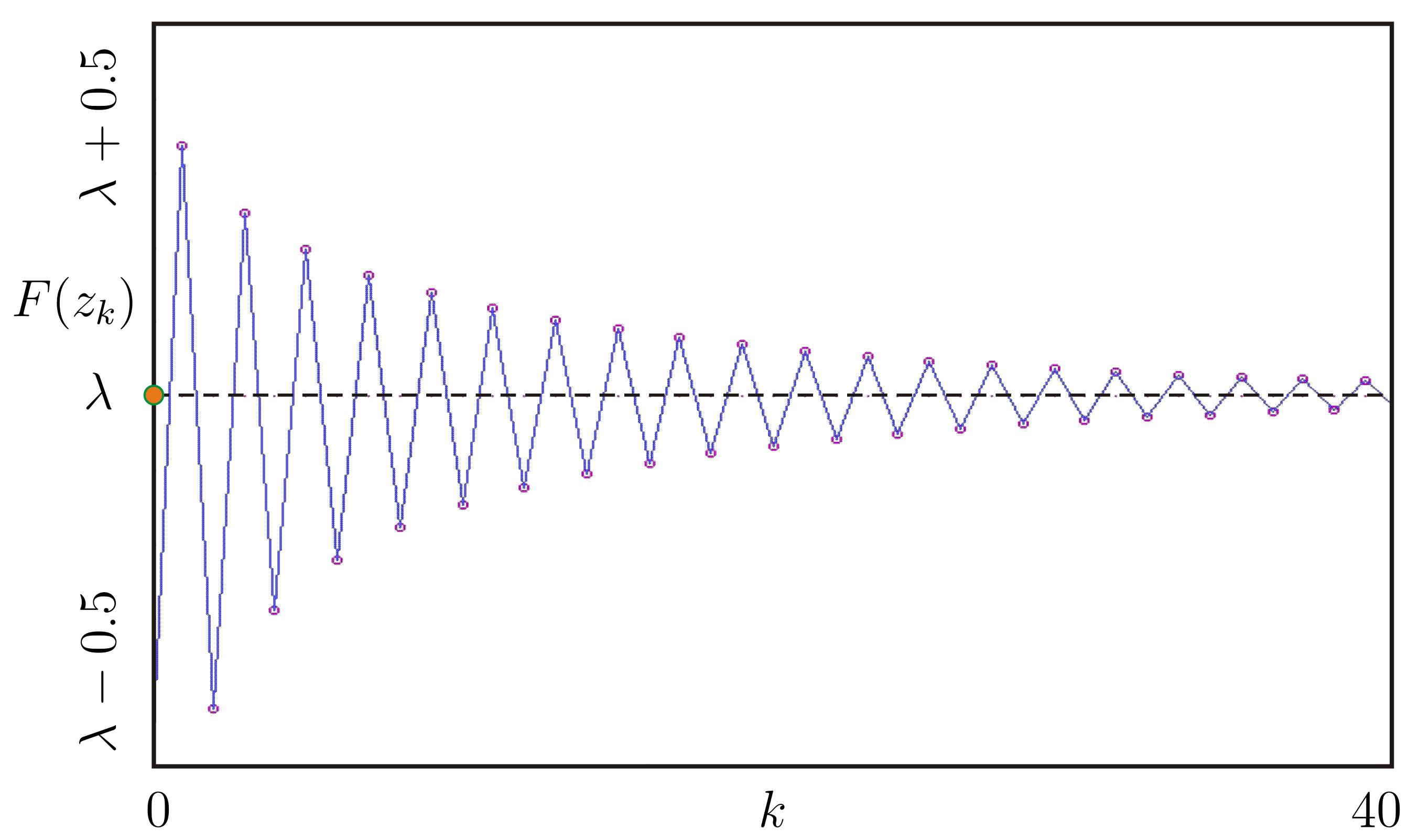}
 \caption{\label{f82}\small The convergence of the sequence $F(z_k)$ to  the desired $\lambda$.
 Since all the multipliers are negative $-1<\rho_i<0$, $1\leqslant i\leqslant 3$,  the convergence 
 is non-monotonous. To highlight the evolution, the point
 $F(z_{k-1})$  is connected to the next one $F(z_{k})$ (blue lines).
 }
\end{figure}

\section{BIFURCATION ANALYSIS}\label{sec.bifur}
To investigate the behaviors of the designed IGO  under parameter variation, bifurcation analysis is performed. 
For an interval of the parameter values $a_3$ and following the steps of the design procedure in Section~\ref{sec.design}, the value of $h=h_\Phi=h_F$ is found  and  the stability of a fixed point  $\mathcal{O}(a_3,h)=X$ is evaluated. The condition $h_\Phi=h_F$ is imposed to reduce the number of independent bifurcation parameters.

From Section~\ref{sec.algorithm},  $k_1=60$; $k_2 =40$; $k_3 = 3.0$; $k_4=2.0$;
$g_1=2.0$; $g_2=0.5$.
\textcolor{black}{ For each  $a_3$, the value of $h$ is found
by solving equations \eqref{eq:fi} with $F(z_0)=\lambda$, $\Phi(z_0)=T$ and
 the stability of a fixed point
$\mathcal{O}(a_3,h)=X$} of mapping~\eqref{eq:map} given by  Proposition~\ref{th:fp} is analyzed. 
\begin{figure*}[h]
        \begin{subfigure}[t]{0.24\textwidth}
            \centering
            \includegraphics[width=0.9\columnwidth]{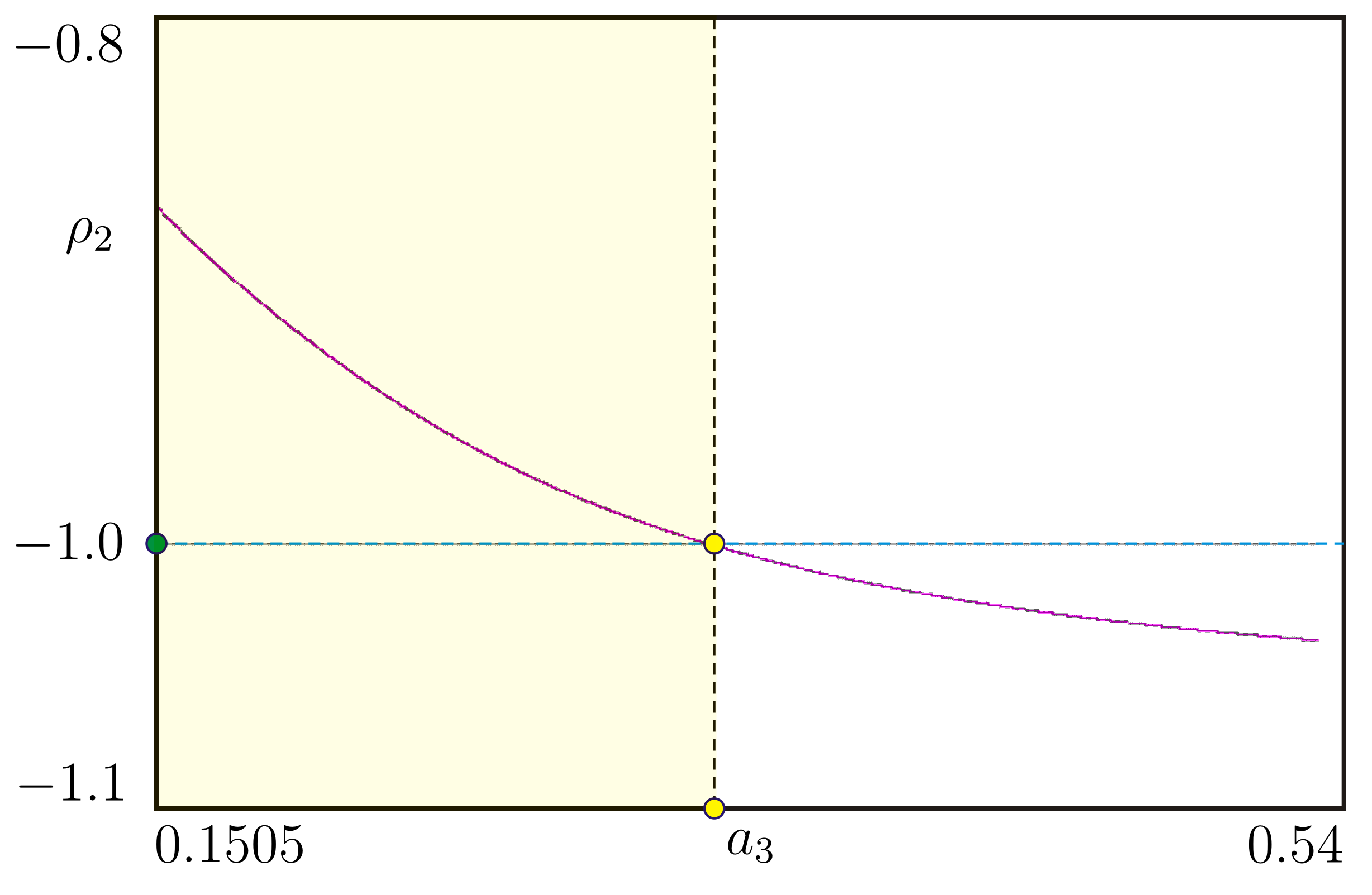}
            \caption[Network2]%
            {{\small Variation  of the maximal in absolute
value multiplier $\rho_2$ of a fixed point; $0.1505<a_3<0.54$. }}    
            \label{f69a}
        \end{subfigure}
        \hfill
        \begin{subfigure}[t]{0.24\textwidth}  
            \centering 
            \includegraphics[width=0.9\columnwidth]{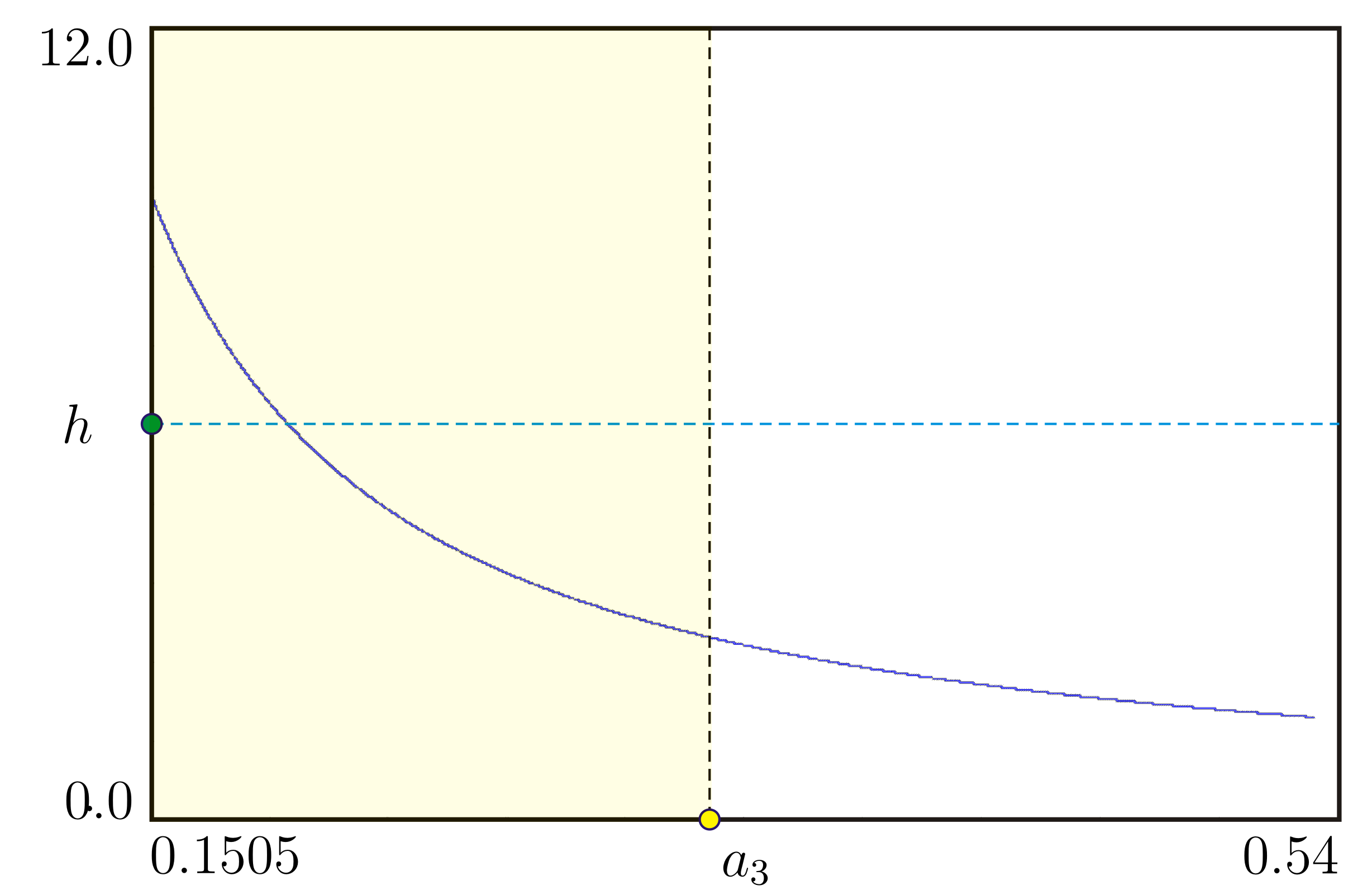}
            \caption[]%
            {{\small Dependence of $h$ on $a_3$}}    
            \label{f69b}
        \end{subfigure}
        \hfill
        \begin{subfigure}[t]{0.24\textwidth}   
            \centering 
            \includegraphics[width=0.9\columnwidth]{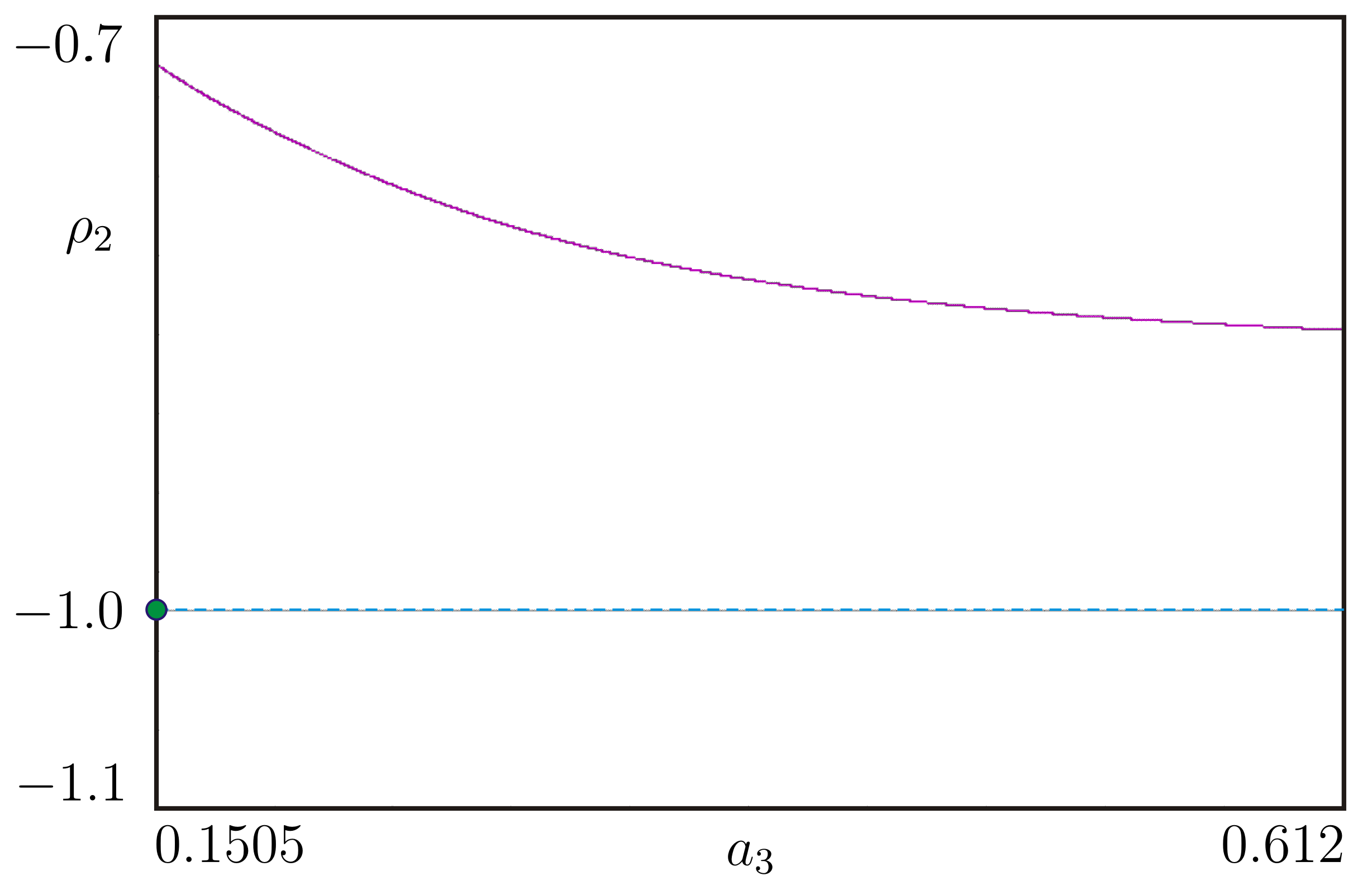}
            \caption[]%
            {{\small Variation  of the maximal in absolute value multiplier $\rho_2$; $0.1505<a_3<0.612$.}}    
            \label{f69c}
        \end{subfigure}
        \hfill
        \begin{subfigure}[t]{0.24\textwidth}   
            \centering 
            \includegraphics[width=0.9\columnwidth]{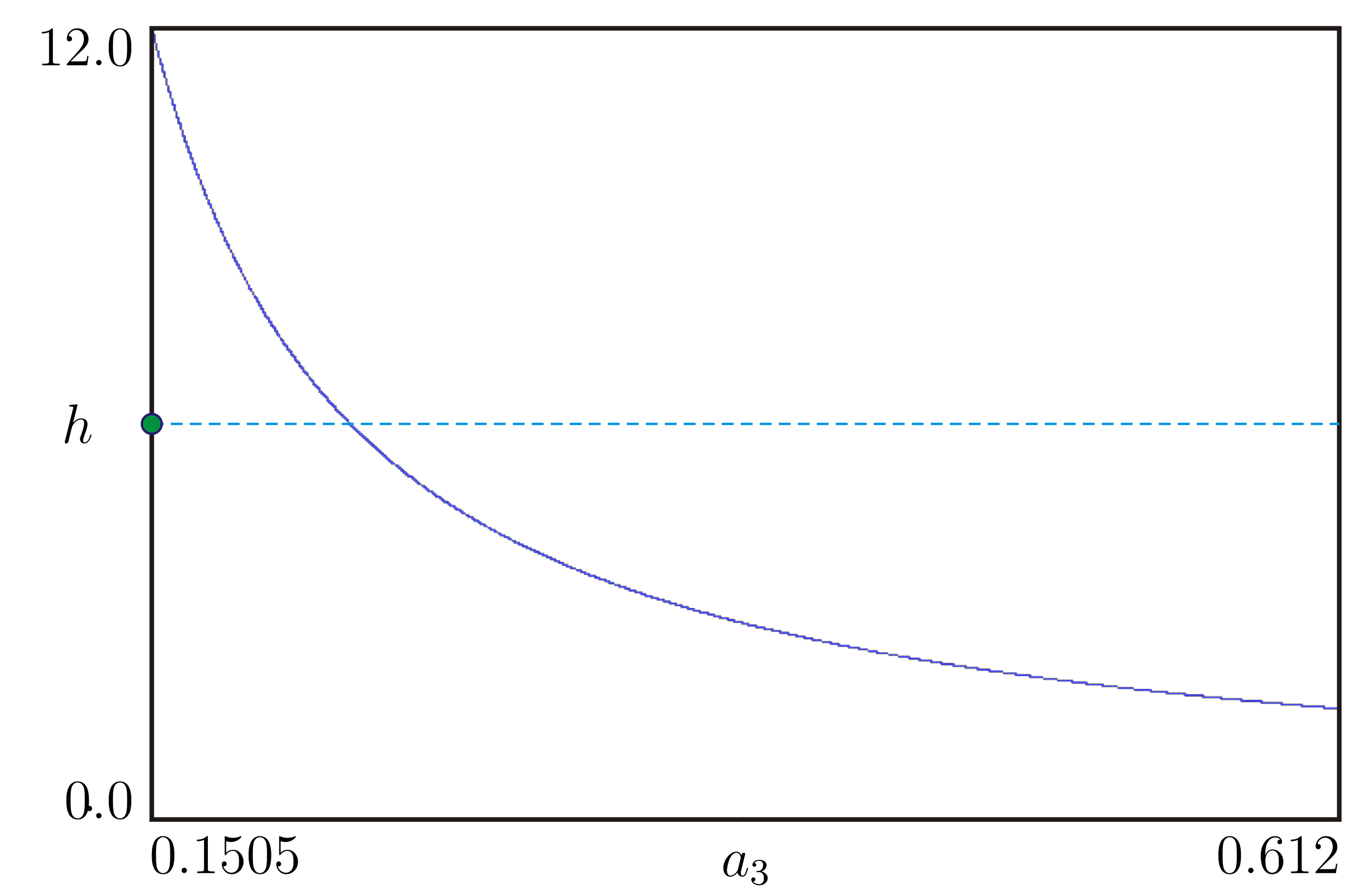}
            \caption[]%
            {{\small Dependence of $h$ on $a_3$}}    
            \label{f69d}
        \end{subfigure}
        \caption[ The average and standard deviation of critical parameters ]
        {\small Bifurcation analysis: (a),(b) - for $T=66.7502$, $\lambda = 4.66$, (c),(d) - for $T=65.4542$, $\lambda=4.7273$.} 
        \label{f69}
    \end{figure*}

\textcolor{black}{An example of such
an analysis is shown in Fig.~\ref{f69} (a),(b) for $T = 66.75$, $\lambda = 4.66$, and
$0.1505<a_3<0.54$. When the parameter $a_3$
increases, the fixed point $\mathcal{O}=X$ undergoes
a period-doubling bifurcation: the maximal in absolute
value multiplier  $\rho_2$ of the fixed point $\mathcal{O}$ emerges from the
unit circle though $-1$ (see Fig.~\ref{f69}(b)). In these figures, the stability region 
of the fixed point $\mathcal{O}$ is in yellow. Fig.~\ref{f69}(b) depicts the
dependence of $h$ on $a_3$ in the transition shown in Fig.~\ref{f69}(a).}

\textcolor{black}{ Fig.~\ref{f69}(c),(d) presents the results of the
bifurcation analysis for other values of the cycle parameters:
$T=65.45$, $ \lambda=4.73$ and
$0.1505<a_3<0.612$. As pointed out earlier, the stability of the
fixed point $\mathcal{O}$ (1-cycle) is determined by  $F'(z_0)$ and $\Phi'(z_0)$.}

\textcolor{black}{ 
Introduce  $\tau$  as
\begin{gather*}
 \tau=1/|\Lambda|,\quad \Lambda=\ln\:r_0,\\ 
r_0=\max_{1\leqslant i\leqslant 3}|\rho_i|.
\end{gather*}
The value of $\tau$ characterizes the convergence
time of the trajectory initiated a point in the basin  of
attraction of the stable fixed point $\mathcal{O}$ to the corresponding orbit. }

\textcolor{black}{  Fig.~\ref{f75}(a),(b) show variation of the
$\tau$  and $\rho_2$ in the intervals $-0.6<F'<0.0$ and  $\Phi'=-\dfrac{k_2}{k_4}F'$
for $a_3 = 0.3005$ and $a_3=0.2505$ ($T=66.75$,
$\lambda=4.66$), respectively.  
}

\begin{figure}[h]
\centering \centering
\begin{subfigure}[t]{0.85\columnwidth}  
            \centering 
            \includegraphics[width=0.85\columnwidth]{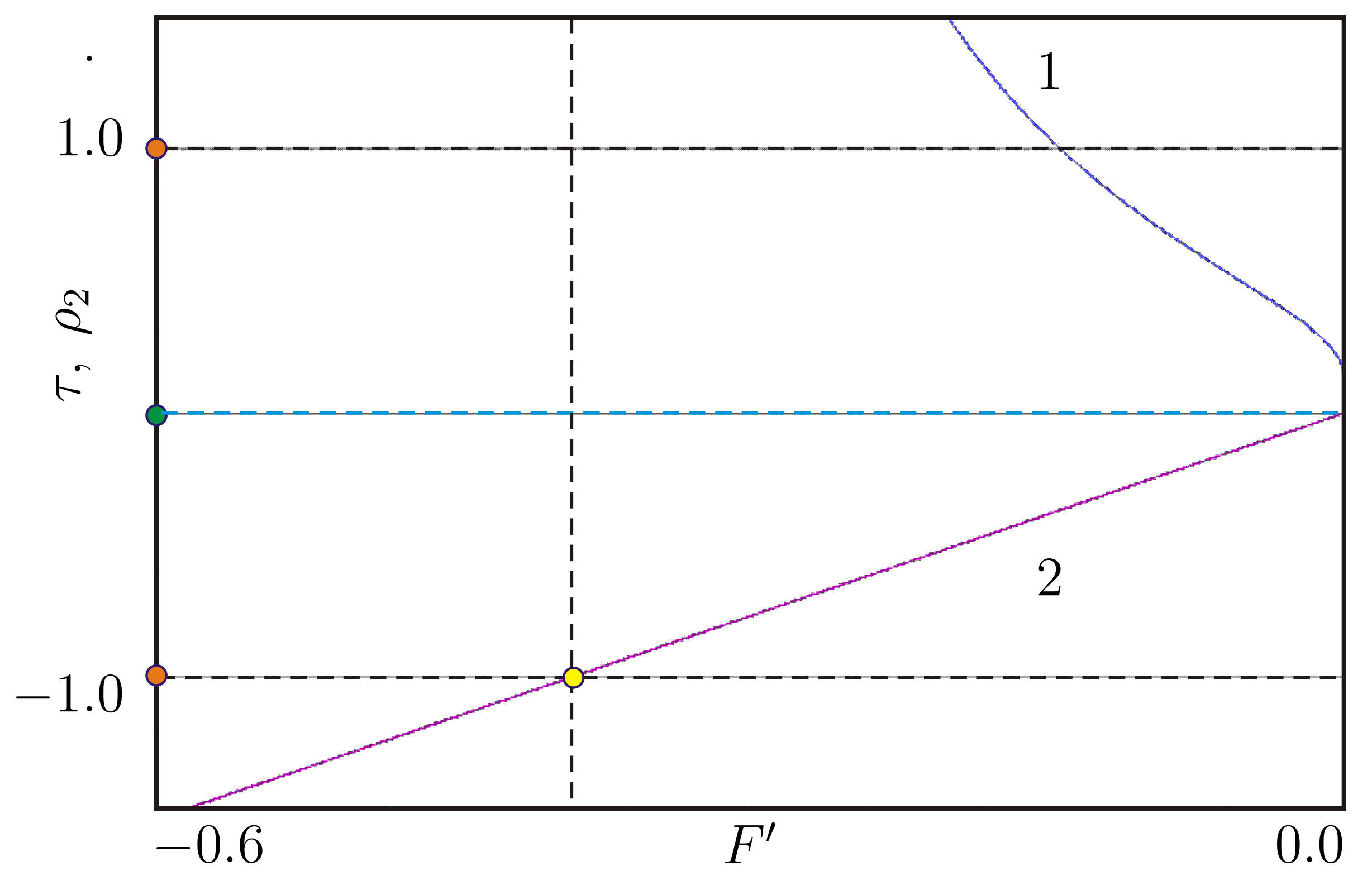}
        \end{subfigure}
        \hfill
        \begin{subfigure}[t]{0.85\columnwidth}  
            \centering 
            \includegraphics[width=0.85\columnwidth]{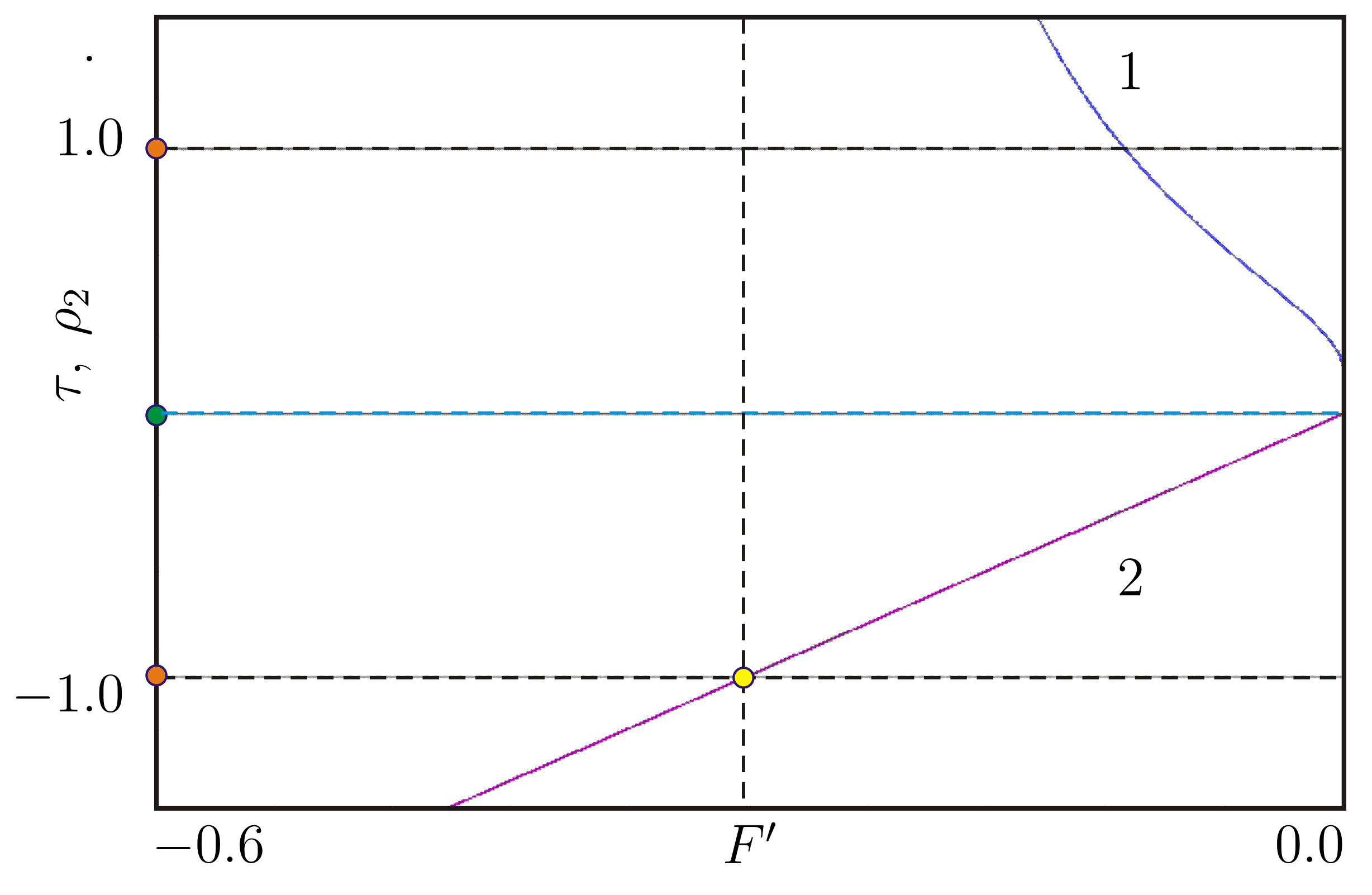}
        \end{subfigure}
\caption{\label{f75}\small (a) Variation  of  $\tau$ and $\rho_2$ on $-0.6<F'<0.0$ and $\Phi'=-\dfrac{k_2}{k_4}F'$ for $a_3=0.3005$0. (b) Variation of  $\tau$ and $\rho_2$ on $-0.6<F'<0.0$,  $\Phi'=-\dfrac{k_2}{k_4}F'$ for $a_3=0.2505$0. Here 1 denotes the  convergence time $\tau$ and 2 marks $\rho_2$. $T=66.7502$, $\lambda=4.6625$.} 
\end{figure}


\section{CONCLUSIONS}\label{sec.concl}
A novel problem of designing the IGO to admit a pre-defined periodic solution is introduced. It is exemplified by a case of stable 1-cycle with pre-defined solution parameters. It is demonstrated that the 1-cycle specifications are translated into a unique positive fixed point of the impulse-to-impulse discrete map. This fixed point can be rendered stable by selecting the modulation functions of the IGO. Further analysis is needed to control the type (monotonous, non-monotonous) and the speed of convergence  of the IGO solutions to the orbit corresponding to the obtained fixed point.
    
\vskip0.1cm

\bibliography{refs,observer}
\bibliographystyle{IEEEtran}
\end{document}